\title{On two Algorithmic Problems \\about Synchronizing Automata}
\author{Mikhail V. Berlinkov\thanks{Supported by the Presidential Programme for young researchers, grant
MK-3160.2014.1 and by the Russian Foundation for Basic research,
grant 13-01-00852}}
\authorrunning{Mikhail V. Berlinkov}
\institute{ Institute of Mathematics and Computer Science,\\
              Ural Federal University, 620000 Ekaterinburg, Russia\\
              \email{berlm@mail.ru}}
\DeclareSymbolFont{rsfscript}{OMS}{rsfs}{m}{n}
\DeclareSymbolFontAlphabet{\mathrsfs}{rsfscript}
\newcommand{\sw}{reset word}
\newcommand{\sa}{synchronizing automata}
\newcommand{\san}{synchronizing automaton}
\newcommand{\rl}{reset threshold}
\newcommand{\rt}{reset threshold}
\begin{document}
\maketitle

\begin{abstract}
Under the assumption $\mathcal{P} \neq \mathcal{NP}$, we prove that
two natural problems from the theory of synchronizing automata
cannot be solved in polynomial time. The first problem is to decide
whether a given reachable partial automaton is synchronizing. The
second one is, given an $n$-state binary complete synchronizing
automaton, to compute its reset threshold within performance ratio
less than $d \ln{(n)}$ for a specific constant $d>0$.
\end{abstract}

\section{Testing for synchronization}
\label{sec_testing}


A \emph{deterministic finite automata} (DFA) $\mathrsfs{A}$ is a
triple $\langle Q,\Sigma,\delta \rangle$ where $Q$ is the state set,
$\Sigma$ is the input alphabet and $\delta: Q \times \Sigma
\rightarrow Q$ is the \emph{transition function}. If $\delta$ is
totally defined on $Q \times \Sigma$ then $\mathrsfs{A}$ is called
\emph{complete}, otherwise $\mathrsfs{A}$ is called \emph{partial}.

The transition function can be naturally extended to $\Sigma^{*}$ as
follows. For every state $q \in Q$ and $u \in \Sigma^{*}$ we let
$\delta(q,\lambda) = q$ where $\lambda$ is an empty word, and we
inductively define $\delta(q,ua) = \delta(\delta(q,u),a)$ for any $a
\in \Sigma$ provided that $\delta(q,u)$ and $\delta(\delta(q,u),a)$
are defined. We can simplify the notation by writing $S.w$ instead
of $\{\delta(q,w) \mid q \in S \}$ for a subset $S\subseteq Q$ and a
word $w \in \Sigma^*$.

A DFA $\mathrsfs{A} = \langle Q,\Sigma,\delta \rangle$ is called
\emph{synchronizing} if there exists a word $w\in\Sigma^*$ such that
$|Q.w|=1$. Notice that here, in contrast to some other versions of
synchronizability studied in the realm of partial automata (see
e.g.~\cite{Mart10}), $w$ is not assumed to be defined at all states.
Each word $w$ with this property is said to be a \emph{reset} word
for $\mathrsfs{A}$. The minimum length of reset word is called the
\emph{reset threshold} of $\mathrsfs{A}$ and denoted by
$rt(\mathrsfs{A})$. Analogously, a word $w$ \emph{synchronizes} a
subset $S \subseteq Q$ if $|S.w|=1$.

Recall that a DFA $\mathrsfs{A} = \langle Q,\Sigma,\delta \rangle$
is called \emph{reachable} if one can choose an \emph{initial} state
$q_0 \in Q$ and a \emph{final} set of states $F \subseteq Q$ such
that each state $q \in Q$ is accessible from $q_0$ and co-accessible
from $F$, i.e. there are words $u,v \in \Sigma^*$ such that $q_0.u =
q$ and $q.v \in F$. This case is of certain interest due to its
applications in dna-computing, namely, reset words for partial
reachable automata serve as \emph{constants} for the corresponding
\emph{splicing} systems (see e.g.~\cite{SplSyst}). It is known that
the problem of testing whether or not a given strongly connected
partial automaton is synchronizing can be solved in polynomial time
(see ~\cite[Algorithm~3]{Emach}). In contrast, we show here that the
problem becomes $PSPACE$-complete if we allow automata to be
reachable instead of being strongly-connected.

Now we adapt the following results from~\cite{SubSynch} about subset
synchronization in complete strongly connected automata to our case.
\begin{theorem}[\text{\cite[Theorem~7]{SubSynch}}]
\label{th_voj} There is a series of strongly connected binary
automata $\mathrsfs{A}_n=\langle Q_n,\{a,b\},\delta_n \rangle$ and
corresponding subsets $S_n \subset Q_n$ such that the minimum length
of synchronizing words for $S_n$ in $\mathrsfs{A}_n$ has order
$2^{\Omega(n)}$.
\end{theorem}

\begin{theorem}[\text{\cite[Theorem~10]{SubSynch}}]
\label{th_voj2}Given a strongly connected binary automaton
$\mathrsfs{A}=\langle Q,\{a,b\},\delta \rangle$ and a subset of
states $S \subset Q$, it is $PSPACE$-complete to decide whether or
not $S$ can be synchronized in $\mathrsfs{A}$.
\end{theorem}

Let us present a transparent reduction to the problem of
synchronization of partial automata in the following lemma.
\begin{lemma}
\label{red_lemma}For each complete strongly connected binary
automaton $\mathrsfs{A}=\langle Q, \{a,b\},\delta \rangle$ and a
subset $S \subseteq Q$ one can construct in $O(|Q|)$ time a
reachable partial $3$-letter automaton $\mathrsfs{B}=\langle
Q',\{a,b,c\},\delta' \rangle$ with at most $2|Q|$ states such that:
\begin{enumerate}
    \item If $u \in \{a,b\}^{*}$ synchronizes $S$ in $\mathrsfs{A}$ then the word $c u$ synchronizes $\mathrsfs{B}$;
    \item If $w \in \{a,b,c\}^{*}$ synchronizes $\mathrsfs{B}$ then $w$ has a suffix $u \in \{a,b\}^{*}$
such that $u$ synchronizes $S$ in $\mathrsfs{A}$;
    \item $S$ can be synchronized in $\mathrsfs{A}$ if and only if
    $\mathrsfs{B}$ is synchronizing;
    \item If $S$ can not be synchronized in $\mathrsfs{A}$ by words
    of length less than $R$ then the reset threshold of
    $\mathrsfs{B}$ is at least $R$.
\end{enumerate}
\end{lemma}
\begin{proof}
Denote $k=|S|$ and let $S = \{s_0, s_1, \dots, s_{k-1}\}$ and $Q' =
Q \cup Z$ where $Z = \{z_0, z_1, \dots, z_{k-1} \}$ is the set of
new states.

Define the transition function $\delta'$ as follows.
\begin{equation}\delta'(q,x) = \begin{cases}
\delta(q,x),\ & q \in Q, x \in \{a,b\}; \\
s_{i},\ & q = z_i, x = c; \\
z_{(i+1 \mod k)}, \ & q = z_{i}, x \in \{a,b\}.
\end{cases}
\end{equation}
Since $\mathrsfs{A}$ is strongly connected and there is the cycle by
$Z$, $\mathrsfs{B}$ is reachable for $q_0 = z_0$ and $F = Q$.

Since $Q'.c = S \subseteq Q$ while $c$ is undefined on $Q$ and the
action of the letters $a,b$ coincides in $\mathrsfs{A}$ and
$\mathrsfs{B}$ on $Q$, Claim~1 follows. Since $w$ synchronizes
$\mathrsfs{B}$ while $Z.a = Z.b = Z$ and $c$ is undefined on the
states from $Q$, we conclude that $w$ should be of the form $v c u$
where $u,v \in \{a,b\}^*$. Since $Q'.vc = S$, the word $u$ should
synchronize $S$ and Claim~2 follows. Claims~3 and 4 immediately
follow from Claims~1 and 2.
\end{proof}


\begin{theorem}
\label{thm_4} Testing a given reachable partial $3$-letter automaton
for synchronization is $PSPACE$-complete. There is a series of
reachable partial $3$-letter $n$-state automata with reset
thresholds of order $2^{\Omega(n)}$.
\end{theorem}
\begin{proof}
The problem is $PSPACE$-hard by Theorem~\ref{th_voj2} and Claim~3 of
Lemma~\ref{red_lemma}. The corresponding series of reachable
automata exists by Theorem~\ref{th_voj} and Claim~4 of
Lemma~\ref{red_lemma}.

It remains to prove that the problem belongs to $PSPACE$. Given a
reachable partial automaton $\mathrsfs{A}=\langle Q,\Sigma,\delta
\rangle$ we store only a current subset $S$ initialized by $Q$. In
an endless loop, we nondeterministically choose a letter $a \in
\Sigma$ and let $S := S.a$. If at some step $|S| = 1$ we return
``yes'', otherwise continue the iteration. Since for this algorithm
$O(n)$ memory is enough, we have an $NPSPACE$ algorithm which is
$PSPACE$ by the Savitch's theorem~\cite{Savitch}.
\end{proof}

The following lemma relies on the usual technique of encoding
letters in states (see e.g.~\cite{MyTOCS2013}).
\begin{lemma}
\label{red_2lemma}For each reachable $d$-letter partial automaton
$\mathrsfs{A}= \langle Q,\Sigma,\delta \rangle$, one can construct
in polynomial time a binary reachable partial automaton
$\mathrsfs{B}$ such that $\mathrsfs{A}$ is synchronizing if and only
if $\mathrsfs{B}$ is synchronizing, $|Q'| = d|Q|$ and
$rt(\mathrsfs{A}) \leq rt(\mathrsfs{B}) \leq d*rt(\mathrsfs{A})+1$
if $\mathrsfs{A}$ is synchronizing.
\end{lemma}
\begin{proof}
Let $\Sigma=\{a_1,a_2, \dots ,a_d\}$ we construct $\mathrsfs{B}=
\langle Q',\{a,b\}, \delta' \rangle$ as follows. We let $Q' = Q \
\times \Sigma$ and define the transition function $\delta':Q'\times
\{a,b\}\to Q'$ as follows:
\begin{equation}
\label{delta_def} \delta'((q,a_i),x) = \begin{cases}
(q,a_{\min(i+1,d)})  \ & \text{if } x=a,\\
(\delta(q,a_{i}), a_1)  \ & \text{if $x=b$ and $a_{i}$ is defined on
$q$},\\
\text{undefined otherwise}.
\end{cases}
\end{equation}

Thus, the action of $a$ on a state $q'\in Q'$ substitutes an
appropriate letter from the alphabet $\Sigma$ of $\mathrsfs{A}$ for
the second component of $q'$ while the action of $b$ imitates the
action of the second component of $q'$ on its first component and
resets the second component to $a_1$.

Given a word $w = a^{i_1} b a^{i_2} b \dots a^{i_k} b \in
\{a,b\}^{*}$, let $r(w)$ be a \emph{reduced} word
$a^{\min(i_1,d)}ba^{\min(i_2,d)}b \dots a^{\min(i_k,d)}b$ in
$\{a,b\}^*$. Besides that, we define the map $f: \{a,b\}^{*} \mapsto
\Sigma^{*}$ by $f(w) = a_{\min(i_1,d)}a_{\min(i_2,d)} \dots
a_{\min(i_k,d)}$. Given $w \in \{a,b\}^{*}$, by the definition of
$f$ we get that if $f(w)$ resets $\mathrsfs{A}$ then $bw$ resets
$\mathrsfs{B}$, and if $w$ resets $\mathrsfs{B}$ then $f^{-1}(r(w))$
resets $\mathrsfs{A}$. The lemma follows.
\end{proof}

As a straightforward corollary of Theorem~\ref{thm_4} and
Lemma~\ref{red_2lemma} (for $d=3$) we get the main result of this
section.
\begin{corollary}
\label{cor_2}Testing a given reachable partial binary automaton for
synchronization is $PSPACE$-complete. There is a series of reachable
partial binary $n$-state automata with reset thresholds of order
$2^{\Omega(n)}$.
\end{corollary}

\section{Approximation of reset thresholds}
\label{sec_find_rt}

In this section we restrict ourself to the case of complete
automata. For this case, testing for synchronization is polynomial.
When an automaton is synchronizing, the next natural problem is to
calculate its reset threshold. It is known that a precise
calculation of the \rl\ is computationally hard (see
e.g.~\cite{Ep90},\cite{OU10}). There are some polynomial time
algorithms that, given a \san, find a \sw\ for it, see,
e.g.~\cite{Ep90}. These algorithms can be used as approximation
algorithms for calculating the \rl, and it is quite natural to ask
how good such a polynomial approximation can be. The quality of an
approximation algorithm is measured by its performance ratio, which
for our problem can be defined as follows. Let $K$ be a class of
\sa. We say that an algorithm $M$ \emph{approximates the \rl\ in
$K$} if, for an arbitrary DFA $\mathrsfs{A}\in K$, the algorithm
calculates an integer $M(\mathrsfs{A})$ such that $M(\mathrsfs{A})
\ge rt(\mathrsfs{A})$. The \emph{performance ratio} of $M$ at
$\mathrsfs{A}$ is
$R_M(\mathrsfs{A})=\dfrac{M(\mathrsfs{A})}{rt(\mathrsfs{A})}$. The
author~\cite{MyTOCS2013} proved that, unless $\mathcal{P} =
\mathcal{NP}$, for no constant $r$, a polynomial time algorithm can
approximate the \rl\ in the class of all binary \sa\ with
performance ratio less than $r$.

When no polynomial time approximation within a constant factor is
possible, the next natural question is whether or not one can
approximate within a logarithmic factor. Gerbush and
Heeringa~\cite{Gerb1} conjectured that if $\mathcal{P} \ne
\mathcal{NP}$, then there exists $\alpha>0$ such that no polynomial
time algorithm approximating the \rl\ in the class of all \sa\ with
a fixed number $k>1$ of input letters achieves the performance ratio
$\alpha\log|Q|$ at all DFAs $\langle Q,\Sigma,\delta \rangle$. Using
a reduction from the problem Set-Cover and a powerful
non-approximation result from~\cite{AMS6}, Gerbush and Heeringa
proved a weaker form of this conjecture when the number of input
letters is allowed to grow with the state number.

Here we prove the conjecture from~\cite{Gerb1} in its full
generality, for each fixed size $k>1$ of the input alphabet. Though
we depart from the same reduction from Set-Cover as in~\cite{Gerb1},
we use not only the result from~\cite{AMS6}, but also some
ingredients from its proof, along with an encoding of letters in
states.

Let us follow~\cite{AMS6} and~\cite{Gerb1} below. Given a universe
$\mathcal{U} = \{u_1, \dots ,u_n\}$ and a family of its subsets,
$\mathcal{S} = \{S_1, \dots ,S_m\} \subseteq P(\mathcal{U})$ such
that $\bigcup_{S_j \in \mathcal{S}}S_j = \mathcal{U}$, Set-Cover is
the problem of finding there a minimal sub-family $\mathsf{C}
\subseteq \mathcal{S}$ that covers the whole universe in the sense
that $\bigcup_{S_j \in \mathsf{C}}S_j = \mathcal{U}$. Denote the
size of the minimal sub-family by $OPT(\mathcal{U},\mathcal{S})$.
Set-Cover is a classic $\mathcal{NP}$-hard combinatorial
optimization problem, and it is known that it can be approximated in
polynomial time to within $\ln{(n)} - \ln{(\ln{(n)})} + \Theta(1)$
(see~\cite{APP1,APP2}).

The following transparent reduction from Set-Cover is presented
in~\cite{Gerb1}. Given a Set-Cover instance
$(\mathcal{U},\mathcal{S})$, define the automaton
$$\mathrsfs{A}(\mathcal{U},\mathcal{S}) = \langle \mathcal{U} \cup
\{\hat{q}\}, \Sigma = \{a_1, \dots a_m\} \rangle$$ where the
transition function is defined as follows.
\begin{equation}\delta(u,a_i) = \begin{cases}
\hat{q},\ & u \in S_i\\
u, \ & u \notin S_i.
\end{cases}
\end{equation}

\begin{remark}
\label{rem_reduction} Let $\mathrsfs{A} = \langle Q,\Sigma,\delta
\rangle $ be the automaton defined by $(\mathcal{U},\mathcal{S})$ as
above. Then $rt(\mathrsfs{A}) = OPT(\mathcal{U},\mathcal{S}), \quad
|Q| = |\mathcal{U}|+1, \quad |\Sigma| = |\mathcal{S}|.$
\end{remark}

The following powerful result has been obtained in~\cite{AMS6}.
\begin{theorem}[\text{\cite[Theorem~7]{AMS6}}]
\label{th_AMS} Unless $\mathcal{P} = \mathcal{NP}$, no polynomial
time algorithm can approximate Set-Cover within performance ratio
less than $c_{sc} \ln{n}$ where $n$ is the size of the universe and
$c_{sc} > 0.2267$ is a specific constant.
\end{theorem}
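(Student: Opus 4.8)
The plan is to prove the inapproximability through a gap-preserving reduction from a hard gap problem supplied by the PCP machinery, combined with an explicitly constructed combinatorial gadget. First I would fix the starting point: the two-prover one-round (Label Cover) problem. By the PCP theorem together with Raz's parallel repetition theorem, there is a constant-degree Label Cover instance --- a bipartite constraint graph with left/right alphabets $[A],[B]$ and a projection constraint $\pi_e:[B]\to[A]$ on each edge $e$ --- for which it is $\mathsf{NP}$-hard to distinguish the YES case (some labeling satisfies every edge) from the NO case (no labeling satisfies more than an $\varepsilon$-fraction of edges). Crucially, $\varepsilon$ can be driven down to $2^{-\Theta(\ell)}$ by taking $\ell$ parallel repetitions, at the price of alphabets of size $2^{\Theta(\ell)}$; keeping the whole reduction polynomial forces $\ell=O(\log n)$, and it is precisely this trade-off that ultimately pins down the explicit constant $c_{sc}$.

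Second, I would introduce the gadget, a \emph{partition system}: a ground set $\mathcal{B}$ of $m$ points equipped with $L$ partitions, each splitting $\mathcal{B}$ into $d$ parts, with the covering property that any family of parts drawn from pairwise distinct partitions that still covers $\mathcal{B}$ must use at least $h=(1-o(1))\,d\ln m$ parts, whereas the $d$ parts of a single partition cover $\mathcal{B}$ trivially. I would then assemble the Set-Cover instance: the universe is essentially $E\times\mathcal{B}$, and each set is indexed by a (vertex, label) pair, so that a label assignment to a vertex selects, on each incident edge, one part of the partition indexed by that label. Consistent labels on an edge let the corresponding points be covered cheaply (as the parts of one partition), while inconsistent labels force a one-part-per-partition cover, which is expensive. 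Thus a satisfying labeling yields a Set-Cover of size proportional to the number of Label-Cover vertices, while in the NO case the soundness propagates --- only an $\varepsilon$-fraction of edges can be covered cheaply, so the optimum blows up by a factor of $(1-o(1))\ln m$. Tuning $m,L,d$ and $\varepsilon$ against the universe size $n$ yields a ratio of the form $c\ln n$.

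The main obstacle --- and the actual content of~\cite{AMS6} --- is the \emph{deterministic polynomial-time construction} of a partition system whose parameters are strong enough to yield an explicit constant. A random partition system has the required covering property with high probability, but a probabilistic construction only gives hardness under a weaker hypothesis (such as $\mathsf{NP}\not\subseteq\mathsf{DTIME}(n^{O(\log\log n)})$) or randomized hardness. To obtain the bound under $\mathsf{P}\neq\mathsf{NP}$ I would invoke the AMS framework of \emph{$k$-restrictions}: phrase the requirement that every $k$-tuple of partitions behaves randomly as a $k$-restriction problem and derandomize it via a near-optimal small sample space, producing the partition system in time polynomial in $m$. I would then carry out the quantitative optimization, balancing the soundness $\varepsilon=2^{-\Theta(\ell)}$, the gadget parameters, and the polynomial-size constraint $\ell=O(\log n)$, so that the resulting inapproximability factor is exactly $c_{sc}\ln n$ with $c_{sc}>0.2267$.

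Finally I would close the argument: a polynomial-time algorithm approximating Set-Cover within ratio less than $c_{sc}\ln n$ would, through the reduction, separate the YES and NO cases of the $\mathsf{NP}$-hard Label Cover gap problem, forcing $\mathsf{P}=\mathsf{NP}$. I expect the derandomized gadget construction together with the optimization of the constant to be the genuinely hard steps; the reduction skeleton --- completeness and the soundness propagation from Label Cover through the partition system --- is by now standard Feige-style bookkeeping.
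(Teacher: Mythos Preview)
The paper does not prove this theorem at all: Theorem~\ref{th_AMS} is quoted verbatim from~\cite{AMS6} and used as a black box, so there is no ``paper's own proof'' to compare your proposal against. Your sketch is, in fact, a faithful high-level outline of the original Alon--Moshkovitz--Safra argument (Label Cover via Raz repetition, Feige-style partition-system gadget, derandomization through the $k$-restrictions framework, and the parameter balancing that yields the explicit constant), so as an account of where the cited result comes from it is fine; but none of this appears in the present paper.

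What the paper \emph{does} extract from~\cite{AMS6}, later in Section~\ref{sec_find_rt}, is not the proof of Theorem~\ref{th_AMS} itself but two quantitative byproducts of that proof: a lower bound $|\mathrsfs{U}|\ge|\Phi|$ on the universe and an upper bound $|\texttt{S}|=\Theta(1)|\Phi||X|^d$ on the set family, together with the freedom to take $|\Phi|=|X|^r$ for arbitrarily large constant $r$. These are read off directly from the construction in~\cite{AMS6} and are the only pieces of the AMS machinery the present paper actually needs; your proposal does not isolate them, but then again the statement you were asked to prove does not require them either.
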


Here we prove the aforementioned conjecture from~\cite{Gerb1} by
encoding binary representation of letters in states and using some
properties from the proof of Theorem~\ref{th_AMS}.
\begin{lemma}
\label{encoding_lem} For every $m$-letter \san\
$\mathrsfs{A}=\langle Q,\Sigma,\delta \rangle$, there is a
$2$-letter \san\ $\mathrsfs{B} = \mathrsfs{B}(\mathrsfs{A}) =
\langle Q',\{0,1\}, \delta' \rangle$ such that
\begin{equation*}
\label{limits} rt(\mathrsfs{A}) \lceil \log_2{m}+1 \rceil \le
rt(\mathrsfs{B}) \le \lceil \log_2{m}+1 \rceil(1+rt(\mathrsfs{A})),
\end{equation*}
$\mathrsfs{B}$ has at most $4 m |Q|$ states and can be constructed
in polynomial time of $m$ and $|Q|$.
\end{lemma}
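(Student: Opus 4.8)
The plan is to simulate each letter $a_i$ of $\mathrsfs{A}$ by a short block of $0,1$-letters in $\mathrsfs{B}$ that first "reads" the $\lceil\log_2 m\rceil$-bit binary name of $i$ and then performs the action of $a_i$. Concretely, I would take the state set $Q'$ to consist of $|Q|$ "working" copies of $Q$ together with a binary decoding tree: for each $q\in Q$ attach a complete binary tree of depth $\lceil\log_2 m\rceil$ whose root is (a copy of) $q$, so that a length-$\lceil\log_2 m\rceil$ word $w_i\in\{0,1\}^*$ encoding $i$ drives the root $q$ to the leaf corresponding to the pair $(q,i)$. From that leaf, both letters $0$ and $1$ should map to $\delta(q,a_i)$, back in the "root" layer. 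This accounts for $\lceil\log_2 m\rceil+1$ letters of $\mathrsfs{B}$ per letter of $\mathrsfs{A}$, and the tree layers contribute at most $|Q|(2^{\lceil\log_2 m\rceil}-1) \le |Q|(2m-1)$ internal states, so together with the $|Q|$ roots we get at most $2m|Q|$ states; the construction is clearly polynomial in $m$ and $|Q|$. One must be slightly careful that the "read $i$, then apply $a_i$" gadget is deterministic and complete — every internal tree node has well-defined $0$- and $1$-transitions going one level deeper, and leaves drop back to the root layer — and that the whole structure stays weakly connected.

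For the upper bound on $rt(\mathrsfs{B})$: if $u = a_{i_1}a_{i_2}\cdots a_{i_\ell}$ is a reset word for $\mathrsfs{A}$ with $\ell = rt(\mathrsfs{A})$, I would first apply one block $w_{i_1}\cdot(\text{apply})$ of length $\lceil\log_2 m\rceil+1$ to collapse every state — including every internal tree node, since after one block all of $Q'$ lands in the root layer — and then the remaining $\ell$ blocks simulate $u$ on the root layer, for a total of $(\ell+1)(\lceil\log_2 m\rceil+1)$, matching the claimed bound. For the lower bound: any reset word $v$ for $\mathrsfs{B}$, when restricted to the root layer, must at every "apply" step correctly encode some valid letter index, and deleting the decoding bits recovers a word over $\Sigma$ that synchronizes $\mathrsfs{A}$; hence $v$ contains at least $rt(\mathrsfs{A})$ full blocks, giving $|v|\ge rt(\mathrsfs{A})\lceil\log_2 m\rceil$ — wait, here one wants $\lceil\log_2 m + 1\rceil$, so I would define the block length uniformly as $\lceil\log_2 m\rceil+1$ and note $\lceil\log_2 m + 1\rceil = \lceil\log_2 m\rceil + 1$ for integer arithmetic purposes, reconciling the notation with the statement.

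The main obstacle I anticipate is the lower bound, specifically handling words $v$ that do \emph{not} respect block boundaries: a priori, an adversarial reset word for $\mathrsfs{B}$ might interleave decoding bits from different states' trees, or use a partial decoding prefix, and one must argue that this cannot help. The cleanest way is a potential/projection argument: track, for each state $q\in Q$, the position of its trajectory within the tree structure, observe that only a full correctly-encoded block can move a root-layer state to another root-layer state (and it does so exactly as the corresponding $a_i$ would), and that being "stuck" partway down a tree only postpones synchronization. Then collapse the trajectory of $\mathrsfs{B}$ to a trajectory of $\mathrsfs{A}$ by recording the sequence of completed blocks, and conclude that at least $rt(\mathrsfs{A})$ completed blocks are needed, each costing $\lceil\log_2 m\rceil+1$ letters. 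The remaining details — weak connectedness, completeness of $\delta'$, and the exact state count — are routine bookkeeping.
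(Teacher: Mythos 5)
Your overall plan (decode the binary name of a letter along a depth-$\lceil\log_2 m\rceil$ tree attached to each state, then apply the letter; translate reset words block by block in both directions) is the same as the paper's, but the concrete transition rule you chose at the leaves is fatally wrong. If from every leaf \emph{both} letters $0$ and $1$ send the state back to the root layer, then every letter of $\{0,1\}$ increases the level of every state by exactly $1$ modulo $k+1$ (where $k=\lceil\log_2 m\rceil$), independently of which letter is read. Hence the residue of the level modulo $k+1$ is shifted uniformly by the length of the word applied, and two states that start at different levels modulo $k+1$ (e.g.\ a root and its child) can never be merged. Your $\mathrsfs{B}$ is therefore not synchronizing at all, and in particular the claim on which your upper bound rests --- that one block of length $k+1$ brings all of $Q'$ into the root layer --- is false: a state at depth $j$ returns to depth $j$ after any $k+1$ letters.

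The paper's construction differs in exactly the point you glossed over: a state with a full buffer (your leaf) applies the decoded letter only on input $1$, while on input $0$ it \emph{stays put} (a self-loop). This stalling transition destroys the level invariant: the word $0^k$ drives every state to a full-buffer state and parks it there, and a following $1$ then drops the whole automaton into the root layer, after which your block-by-block simulation of a reset word of $\mathrsfs{A}$ works as intended and yields the upper bound $(k+1)(rt(\mathrsfs{A})+1)$. The stall does no harm to the lower bound, since a shortest word merging $Q\times\{\lambda\}$ never uses it; your projection argument for the lower bound is essentially the paper's and would go through once the construction is repaired. (Minor additional points: you should pad $m$ to a power of two before building the trees, as the paper does, both so that every leaf encodes a genuine letter and to control the state count; your count of the tree nodes is also off, since a complete binary tree of depth $k$ has $2^{k+1}-1$ vertices, not $2^{k}$.)
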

\begin{proof}
Let $\Sigma=\{a_1, \dots ,a_m\}$ and for simplicity assume that $m$
is a power of $2$, i.e. $m=2^k$ (otherwise we can add at most $m-1$
letters with trivial action without impact on the bounds). Let
$\ell: \{0,1\}^{k} \mapsto \Sigma$ be a bijective function. Set $Q'
= Q \times \{0,1\}^{\leq k}$ and define the transition function
$\delta':Q'\times \{0,1\} \to Q'$ as follows. For each $q \in Q$,
each binary sequence $w \in \{0,1\}^{\leq k}$ and each bit $x \in
\{0,1\}$, we let
\begin{equation}
\label{def_deltak} \delta'((q,w),x) = \begin{cases}
(q,wx)\ & \text{if } |w| < k;\\
(q.{\ell(w)},\lambda) \ & \text{if } |w| = k, x = 1;\\
(q,w) \ & \text{if } |w| = k, x = 0.
\end{cases}
\end{equation}
Let $u = a_{j_1} a_{j_2} \dots a_{j_t}$ be a \sw\ for
$\mathrsfs{A}$. Then the word $$1^{k+1} \ell^{-1}(a_{j_1}) 1 \dots
\ell^{-1}(a_{j_t}) 1$$ is reset for $\mathrsfs{B}$ and its length
equals $(k+1)(t+1)$. The upper bound follows.

In order to prove the lower bound it is enough to consider the
shortest binary word $u$ which synchronizes the subset $(Q,\lambda)$
in $\mathrsfs{B}$. Since $u$ is chosen shortest, $u = w_1 1 w_2 1
\dots w_r 1$ where $|w_j| = k$ for each $j \in \{1, \dots r\}$.
Indeed, after applying a word $w \in \{0,1\}^k$ to the state of the
form $(q,\lambda)$ it make no sense to apply $0$ in view of the
third choice of definition~\ref{def_deltak}. Then the word
$\ell(w_1) \ell(w_2) \dots \ell(w_r)$ resets $\mathrsfs{A}$ and the
lower bound follows. \qed
\end{proof}

Now, suppose that for some constant $d>0$, there is a polynomial
time algorithm $f_2$ such that
$$rt(\mathrsfs{B}) \leq f_2(\mathrsfs{B}) \leq d\ln{(n)} rt(\mathrsfs{B})$$
for every $2$-letter $n$-state \san\ $\mathrsfs{B}$. Then
Lemma~\ref{encoding_lem} implies that for each $m \geq 2$ there is
also a polynomial time algorithm $f_m$ such that
$$rt(\mathrsfs{A}) \leq f_m(\mathrsfs{A}) \leq d\ln{(4 n m)} (1+rt(\mathrsfs{A}))$$
for every $m$-letter $n$-state \san\ $\mathrsfs{A}$. Indeed, such
algorithm first constructs $\mathrsfs{B}(\mathrsfs{A})$ with at most
$4nm$ states as in Lemma~\ref{encoding_lem}, and then runs $f_2$ on
$\mathrsfs{B}(\mathrsfs{A})$:
\begin{equation}
\label{eq_red} rt(\mathrsfs{A}) \leq f_m(\mathrsfs{A}) =
\frac{f_2(\mathrsfs{B}(\mathrsfs{A}))}{\lceil \log_2{m}+1 \rceil}
\leq d\ln{(4nm)} (rt(\mathrsfs{A})+1).
\end{equation}
Combining this with Theorem~\ref{th_AMS} and
Remark~\ref{rem_reduction} we immediately get the following
corollary.
\begin{corollary}
\label{cor_2let} Let $g(n)$ be an upper bound on the cardinality of
the set family $\mathcal{S}$ as a function of the size of the
universe $\mathcal{U}$ from the reduction to Set-Cover
from~\cite{AMS6}. Then, unless $\mathcal{P} = \mathcal{NP}$, no
polynomial time algorithm approximates \rt\ within performance ratio
$\frac{d}{\log_{n}{g(n)}+1} \ln{(n)}$ for any $d<c_{sc}$ in the
class of all $2$-letter \sa.
\end{corollary}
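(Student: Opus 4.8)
The plan is a contradiction argument that simply feeds the hard Set-Cover instances of~\cite{AMS6} into the chain~\eqref{eq_red}. Suppose that, for some $d<c_{sc}$, there is a polynomial time algorithm $f_2$ that approximates the \rt\ in the class of all $2$-letter \sa\ within performance ratio $\frac{d}{\log_n m(n)+1}\ln n$. As already recorded before the corollary, composing $f_2$ with the construction $\mathrsfs{B}(\cdot)$ of Lemma~\ref{encoding_lem} yields a polynomial time algorithm $f$ for automata over an arbitrary alphabet, and I would apply $f$ to the automata $\mathrsfs{A}(\mathrsfs{U},\texttt{S})$ supplied by Remark~\ref{rem_reduction}, where $(\mathrsfs{U},\texttt{S})$ ranges over the instances produced by the reduction of~\cite{AMS6}, with $|\mathrsfs{U}|=n$ and $|\texttt{S}|\le m(n)$.

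The computation I would then carry out is the following. By Remark~\ref{rem_reduction} we have $rt(\mathrsfs{A}(\mathrsfs{U},\texttt{S}))=OPT(\mathrsfs{U},\texttt{S})$, the automaton $\mathrsfs{A}$ has $n+1$ states and at most $m(n)$ letters, and hence the $2$-letter automaton $\mathrsfs{B}=\mathrsfs{B}(\mathrsfs{A})$ has at most $2(n+1)m(n)$ states and is built in polynomial time. Running $f_2$ on $\mathrsfs{B}$ and dividing by $\lceil\log_2|\texttt{S}|+1\rceil$ exactly as in~\eqref{eq_red} gives
\begin{equation*}
OPT(\mathrsfs{U},\texttt{S}) \;\le\; f(\mathrsfs{A}) \;\le\; \frac{d\ln\!\bigl(2(n+1)m(n)\bigr)}{\log_{n}m(n)+1}\,\bigl(OPT(\mathrsfs{U},\texttt{S})+1\bigr).
\end{equation*}
Now $\ln\bigl(2(n+1)m(n)\bigr)=\ln n+\ln m(n)+O(1)=(\log_n m(n)+1)\ln n+O(1)$, so the denominator $\log_n m(n)+1$ cancels and the fraction on the right equals $d\ln n+O(1)$. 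Since the reduction of~\cite{AMS6} may be taken to produce instances whose optimum grows with $n$, the factor $\bigl(OPT+1\bigr)/OPT$ tends to $1$; therefore for all large enough $n$ the composite polynomial time algorithm approximates Set-Cover within performance ratio $d\ln n+o(\ln n)<c_{sc}\ln n$, contradicting Theorem~\ref{th_AMS} and so forcing \textsf{P}=\textsf{NP}. (A cosmetic point: the state count of $\mathrsfs{B}$ is polynomial in $n$ and $m$ is essentially a power of $n$, so replacing $\log_{n}m(n)$ by $\log_{|Q(\mathrsfs{B})|}m(|Q(\mathrsfs{B})|)$ and $\ln n$ by $\ln|Q(\mathrsfs{B})|$ only affects lower order terms, so one could equally state the bound in terms of the number of states of the $2$-letter automaton.)

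The whole argument is essentially bookkeeping, and the two places that need genuine attention are: (i) the additive ``$+1$'' inherited from Lemma~\ref{encoding_lem} and~\eqref{eq_red}, which is why one must appeal to the fact that the hard Set-Cover instances of~\cite{AMS6} have unbounded optimum rather than, say, $OPT=O(1)$; and (ii) the elementary identity $\ln(2(n+1)m(n))=(\log_n m(n)+1)\ln n+O(1)$, which is exactly what makes the denominator disappear and leaves the clean constant $d<c_{sc}$ in front of $\ln n$. I expect (i) to be the main thing to pin down, since it is the only point where a property of the specific reduction of~\cite{AMS6} beyond the bare statement of Theorem~\ref{th_AMS} is invoked.
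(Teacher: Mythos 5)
Your proposal is correct and follows essentially the same route as the paper: the paper derives the corollary by combining the chain~\eqref{eq_red} (built from Lemma~\ref{encoding_lem}) with Remark~\ref{rem_reduction} and Theorem~\ref{th_AMS}, which is exactly the computation you carry out. You merely make explicit two details the paper leaves implicit — the identity $\ln(2nm)=(\log_n m+1)\ln n+O(1)$ and the fact that the $(OPT+1)/OPT$ factor is harmless because the hard instances of~\cite{AMS6} have unbounded optimum — both of which are sound.
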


Thus it suffices to find a lower bound on the size of the universe
$\mathcal{U}$ and an upper bound on the size of the family of
subsets $\mathcal{S}$ in the reduction to Set-Cover presented
in~\cite{AMS6}. Namely, we need to find a polynomial upper bound for
$g(n)$.

Due to the space limit, we shall use some notation from~\cite{AMS6}
without reproducing all definitions. First, the universe
$\mathcal{U}$ is defined as $[D] \times \Phi \times B$ where $D =
\lfloor \frac{|\Phi|}{\eta|X|} \rfloor$, $\eta$ is a constant. Hence
the rough lower bound for the size of the universe $\mathcal{U}$ is
$|\Phi|$.

The size of the family of subsets $\mathcal{S}$ is equal to $D |X|
|F| + |\Phi||F|^d$ where $F$ is a field of cardinality at most
$2^{\log_2^{1-\beta}{|X|}} \leq |X|$ and $d \geq 2$ is a positive
integer which can be taken equal $3$. Hence the upper bound for
$|\mathcal{S}|$ is $\Theta(1)|\Phi||X|^d$. Thus we get that
$$\log_{|\mathcal{U}|}{|\mathcal{S}|} \leq \frac{d + \log_{|X|}{|\Phi|}}{
\log_{|X|}{|\Phi|}}.$$ Note that $|\Phi|$ is only restricted to be
some polynomial of $|X|$, i.e. it can be chosen to be $|X|^r$ for an
arbitrary large constant $r$. As a conclusion we get the following
lemma, which gives a nice property of Set-Cover itself.
\begin{lemma}
Given any $\gamma>0$, unless $\mathcal{P} = \mathcal{NP}$, no
polynomial time algorithm approximates the Set-Cover with
performance ratio $d\ln{n}$ for any $d<c_{sc}$ in the class of all
Set-Cover instances $(\mathcal{U},\mathcal{S})$ satisfying
$\log_{|\mathcal{U}|}{|\mathcal{S}|} \leq 1 + \gamma$.
\end{lemma}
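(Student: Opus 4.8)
The plan is to assemble the two size estimates recorded above into a single statement by exploiting the one essentially free parameter of the reduction of~\cite{AMS6}, namely $|\Phi|$, which is constrained only to be some fixed polynomial in $|X|$.

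First I would fix the bookkeeping already begun above. The universe of the reduction is $\mathrsfs{U}=[D]\times\Phi\times B$, so $|\mathrsfs{U}|\ge|\Phi|$; and the set family has size $|\texttt{S}|=D|X||F|+|\Phi||F|^3=\Theta(1)\,|\Phi|\,|X|^3$, where the exponent may be taken equal to $3$ and $|F|\le 2^{\log_2^{1-\beta}{|X|}}\le|X|$. Hence $\log_{|\mathrsfs{U}|}|\texttt{S}|\le\frac{\ln|\texttt{S}|}{\ln|\Phi|}=1+\frac{3\ln|X|+O(1)}{\ln|\Phi|}$.

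Next, given $\gamma>0$, I would instantiate $|\Phi|=|X|^r$ for a constant integer $r>3/\gamma$; this is legitimate precisely because $|\Phi|$ is only required to be a polynomial of $|X|$. Then $\log_{|\mathrsfs{U}|}|\texttt{S}|\le 1+\frac{3+o(1)}{r}\le 1+\gamma$ once $|X|$ is large enough, so every instance output by the reduction under this choice of parameters lies in the subclass $\{(\mathrsfs{U},\texttt{S}):\log_{|\mathrsfs{U}|}|\texttt{S}|\le 1+\gamma\}$. Since such hardness instances exist with arbitrarily large universe, the performance-ratio lower bound $c_{sc}\ln n$ of Theorem~\ref{th_AMS} --- which was established exactly by exhibiting instances of this shape --- is inherited verbatim by the subclass, with $d<c_{sc}$ absorbing the lower-order slack as in Theorem~\ref{th_AMS}. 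That is the asserted statement.

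The step I expect to require genuine care, rather than routine manipulation, is verifying that pinning $|\Phi|=|X|^r$ is consistent with all the remaining constraints of the PCP-based construction of~\cite{AMS6}: that $D=\lfloor|\Phi|/(\eta|X|)\rfloor$ remains a positive integer, that the field $F$ with $|F|\le 2^{\log_2^{1-\beta}{|X|}}$ still supports the low-degree and agreement-testing machinery underlying the soundness analysis, and that the constant $c_{sc}$ is unaffected by this particular polynomial rate of growth of $|\Phi|$. Once that is confirmed, replacing $\ln|\mathrsfs{U}|$ by the cruder $\ln|\Phi|$ in the denominator costs only the additive lower-order terms already accounted for, and the remainder is arithmetic.
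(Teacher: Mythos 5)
Your proposal matches the paper's own argument essentially verbatim: both bound $|\mathrsfs{U}|$ from below by $|\Phi|$ and $|\texttt{S}|$ from above by $\Theta(1)|\Phi||X|^d$ with $d=3$, derive $\log_{|\mathrsfs{U}|}{|\texttt{S}|} \leq \frac{d+\log_{|X|}{|\Phi|}}{\log_{|X|}{|\Phi|}}$, and then exploit that $|\Phi|$ may be taken to be $|X|^r$ for an arbitrarily large constant $r$ so that the ratio drops below $1+\gamma$ while the $c_{sc}\ln n$ hardness of Theorem~\ref{th_AMS} is inherited by the resulting subclass. The consistency check you flag at the end is not carried out explicitly in the paper either, which simply asserts that $|\Phi|$ is only restricted to be some polynomial of $|X|$.
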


Combining this with Corollary~\ref{cor_2let} gives us the second
main result.
\begin{theorem}
\label{main_res} Unless $\mathcal{P} = \mathcal{NP}$, no polynomial
time algorithm approximates the \rl\ within performance ratio less
than $0.5 c_{sc} \ln{n}$ in the class of all $n$-state \sa\ with $2$
input letters.
\end{theorem}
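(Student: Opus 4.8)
The plan is to derive Theorem~\ref{main_res} purely by combining Corollary~\ref{cor_2let} with the preceding lemma on Set-Cover instances of low ``rank'', followed by an elementary optimization of the two free parameters. First I would restate the goal contrapositively: it is enough to show that for \emph{every} constant $\alpha<0.5\,c_{sc}$, unless \textsf{P}~$=$~\textsf{NP}, no polynomial time algorithm approximates the \rl\ within performance ratio $\alpha\ln n$ in the class of all $n$-state $2$-letter \sa. Indeed, an algorithm achieving performance ratio less than $0.5\,c_{sc}\ln n$ would in particular satisfy $R_M(\mathrsfs{A})\le\alpha\ln|Q|$ for a suitable such $\alpha$ and all sufficiently large instances.

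So fix $\alpha<0.5\,c_{sc}$. Since $\alpha<c_{sc}/2$, I can first pick a constant $d$ with $\alpha<d<c_{sc}$, and then pick a constant $\gamma>0$ small enough that $\tfrac{d}{2+\gamma}>\alpha$; both choices are possible exactly because $c_{sc}/2>\alpha$, and the value $0.5\,c_{sc}$ in the theorem is precisely $\sup\{\tfrac{d}{2+\gamma}\colon d<c_{sc},\ \gamma>0\}$. Next I apply the preceding lemma with this $\gamma$: in the Alon--Moshkovitz--Safra reduction we may take the universe $\mathrsfs{U}$ and the family $\texttt{S}$ so that $\log_{|\mathrsfs{U}|}{|\texttt{S}|}\le 1+\gamma$; writing $n=|\mathrsfs{U}|$, the upper bound $m(n)$ on $|\texttt{S}|$ then satisfies $m(n)\le n^{1+\gamma}$, hence $\log_n{m(n)}+1\le 2+\gamma$.

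Finally I feed this into Corollary~\ref{cor_2let} with the chosen $d$: the corollary gives that, unless \textsf{P}~$=$~\textsf{NP}, no polynomial time algorithm approximates the \rl\ in the $2$-letter class within performance ratio $\tfrac{d}{\log_n{m(n)}+1}\ln n$, and
$$\frac{d}{\log_n{m(n)}+1}\,\ln n \;\ge\; \frac{d}{2+\gamma}\,\ln n \;>\; \alpha\ln n .$$
Thus no polynomial time algorithm can achieve performance ratio $\alpha\ln n$ either, which is what we wanted; since $\alpha<0.5\,c_{sc}$ was arbitrary, the theorem follows.

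The only point requiring care is bookkeeping rather than a genuine obstacle: one must be sure that the additive ``$+1$'' in the bound $f(\mathrsfs{A})\le d\ln(2nm)(rt(\mathrsfs{A})+1)$ of~\eqref{eq_red} and the gap between $\ln(2nm)$ and $\ln n$ do not erode the constant. These, however, are already absorbed into the slack ``for any $d<c_{sc}$'' built into the statement of Corollary~\ref{cor_2let}, so at the level of Theorem~\ref{main_res} nothing further is needed beyond the parameter optimization above and the observation that the $+1$ on the state count in Remark~\ref{rem_reduction} is asymptotically negligible.
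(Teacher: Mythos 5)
Your proposal is correct and follows exactly the paper's route: the paper proves Theorem~\ref{main_res} in a single sentence by combining the preceding lemma (hardness of Set-Cover on instances with $\log_{|\mathrsfs{U}|}{|\texttt{S}|}\le 1+\gamma$) with Corollary~\ref{cor_2let}, which is precisely what you do. Your explicit bookkeeping --- choosing $d<c_{sc}$ and $\gamma>0$ so that $\tfrac{d}{2+\gamma}>\alpha$ and observing that $\sup\{\tfrac{d}{2+\gamma}\}=0.5\,c_{sc}$ --- merely spells out the parameter optimization the paper leaves implicit.
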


Let us notice that the same bound holds true for any fixed
non-singleton alphabet. Theorem~\ref{main_res} improves the previous
result of the author~\cite{MyTOCS2013} about non-approximability
within any constant factor and also gives the positive answer to the
corresponding conjecture from~\cite{Gerb1}.

It is known (see e.g.~\cite{APP1},\cite{APP2}) that the greedy
algorithm for Set-Cover has a logarithmic performance ratio. Despite
of relations with the problem of computing the reset threshold,
there is a series of automata for which the greedy algorithm =
computes reset threshold with linear performance ratio ([Ananichev,
2014], unpublished). Hence the first natural open question is about
the tightness of the bound in Theorem~\ref{main_res}.

\medskip
\noindent\textbf{Acknowledgements.} The author thanks the anonymous
referees for their useful remarks and suggestions.

\end{document}